\newtheorem{lemma}{Lemma}
\theoremstyle{remark}
\newtheorem{remark}{Remark}
\begin{document}


\title{ Numerical Methods for Quasicrystals }


\author{Kai Jiang}
 \email{kaijiang@xtu.edu.cn}
 \affiliation{LMAM, CAPT and School of Mathematical Sciences, Peking University, Beijing 100871, P.R. China}
 \affiliation{Hunan Key Laboratory for Computation and
 Simulation in Science and Engineering, Xiangtan University,
 Xiangtan 411105, P.R. China}

 \author{Pingwen Zhang}
\thanks{}
 \email{pzhang@pku.edu.cn}
 \affiliation{LMAM, CAPT and School of Mathematical Sciences, Peking University, Beijing 100871, P.R. China}
\homepage{http://www.math.pku.edu.cn/pzhang}

\date{\today}

\begin{abstract}
Quasicrystals are one kind of space-filling structures.
The traditional crystalline approximant method utilizes periodic
structures to approximate quasicrystals. The errors of this
approach come from two parts: the numerical discretization, and
the approximate error of Simultaneous Diophantine Approximation
which also determines the size of the domain necessary for
accurate solution.
As the approximate error decreases, the computational
complexity grows rapidly, and moreover, the approximate error
always exits unless the computational region is the full space.
In this work we focus on the development of numerical method to
compute quasicrystals with high accuracy.  
With the help of higher-dimensional reciprocal space, a new
projection method is developed to compute quasicrystals. 
The approach enables us to calculate quasicrystals rather than
crystalline approximants.  
Compared with the crystalline approximant method,
the projection method overcomes the restrictions of the
Simultaneous Diophantine Approximation, and
can also use periodic boundary conditions conveniently.
Meanwhile, the proposed method efficiently reduces the
computational complexity through implementing in a unit cell and
using pseudospectral method.
For illustrative purpose we work with the Lifshitz-Petrich model,
though our present algorithm will apply to more general systems
including quasicrystals. We find that the projection method
can maintain the rotational symmetry accurately.
More significantly, the algorithm can calculate the free energy
density to high precision. 
\end{abstract}


\maketitle


\section{Introduction}
\label{sec:introduction}

As early as the 1890s, the periodic structures (crystals) in
three dimensions were determined by 230 space groups 
based on periodicity, and then the classical
crystallography was completed, in which the allowed rotational
symmetry is only 1-, 2-, 3-, 4-, 6-fold symmetry. Both the structure
determination and the study of physical properties are based on
the periodicity which allows the study to be simplified to a
unit cell. However, in the 1980s, a forbidden 5-fold symmetry
electron diffraction pattern was discovered by Shechtman
et al.\,\cite{shechtman1984metallic} in a rapid cooled
Al-Mn alloy. Later, the term ``quasicrystals'' appeared for the
first time to describe the non-conventional ordered
structures\,\cite{levine1984quasicrystals}. In an idealized description, 
quasicrystals have quasiperiodic, rather than periodic,
translational order with non-crystallographic symmetry. 
In the early theory of tilings, the discovery of aperiodic tiling
with 5-fold symmetry of the plane by Penrose\,\cite{penrose1974role,
gardner1997extraordinary} showed that such well-ordered systems
were mathematically possible.
Since the original discovery, hundreds of quasicrystals have been
reported and confirmed in metallic
alloys with 5-, 8-, 10-, 12- fold orientational
symmetry\,\cite{steurer2004twenty, tsai2008icosahedral}.
Two decades after the first discovery of quasicrystals in metallic alloys,
several soft quasicrystals have been found in soft matter
systems\,\cite{zeng2004supramolecular, hayashida2007polymeric,
talapin2009quasicrystalline, fischer2011colloidal}. 
The building blocks of the solid-state quasicrystals are the
atoms or small molecular on the atomic scale, whilst the building
blocks of soft quasicrystals are on a much larger scale of ten to
hundreds of nanometers. Therefore the continuous density
distribution are more appropriate for studying soft
quasicrystals. Accordingly the coarse-grained free energy
of density functions have been widely applied to treat phases and
phase transitions, especially for soft matter
systems\,\cite{gompper1994self, dotera2007mean,
barkan2011stability}.

Quasicrystals are one kind of space-filling structures.
Two kinds of theoretical approaches have been developed to
study quasicrystals, or more generally, aperiodic
crystals. The first method is an approximate approach which
studies crystalline approximants rather than quasicrystals.
The crystalline approximants are periodic structures in which the
arrangements of lattices closely approximate
the local structures in quasicrystals. Many crystalline 
approximants related to quasicrystals have also been
discovered\,\cite{goldman1993quasicrystals}.
These approximants may play an important role in 
describing the local structures of quasicrystals, their
formation, stability, and physical properties.
The second approach is a direct method to study the quasicrystals or
aperiodic crystals in the hyperspace, called the higher-dimensional
approach. From this approach, a quasiperiodic structure may be viewed
as a periodic structure by extending it into a higher-dimensional
space. Its symmetries can be expressed in terms of the
conventional point groups and space groups of higher-dimensional
periodic crystals\,\cite{janssen2007aperiodic,
steurer2009crystallography}.
In the higher-dimensional description, quasiperiodic structures
result from irrational physical-space cuts of appropriate
periodic hypercrystal structures. It is so-called cut-and-project
method. The higher-dimensional approach
reveals the hidden structural correlations. 
To implement the higher-dimensional approach in
the direct space one must know the discrete lattice 
arrangements of higher-dimensional periodic structure. 
The embedded
spaces of $d$-dimensional quasiperiodic structures are
abstract spaces whose dimensions are more than three.
The dimensions of the embedded space 
are dependent on the symmetry of the
quasicrystal ($d>1$)\,\cite{steurer2009crystallography,
hiller1985crystallographic}. For example, the quasicrystals
with 5-, 8-, 10-, and 12-fold symmetry need to be embedded into
four-dimensional space. While for the quasiperiodic structures
with 7-, 9-, 18-fold symmetry, the dimension of the embedding spaces
increases to six. For other symmetries, embedding spaces with
even higher dimension will be needed. 
Therefore it makes the method difficult for implementing. 
Another disadvantage of applying the higher-dimensional approach
in direct space is that in practical problems one is required to
compute the continuous quasiperiodic distribution rather than the
discrete quasiperiodic lattice.
A convenient way to describe the quasiperiodic
structures is in the higher-dimensional reciprocal
space\,\cite{steurer2009crystallography}. 
By redefining point-group symmetry and space-group
symmetry in terms of gauge functions, a broader Fourier-space
crystallography\,\cite{rabson1991space, mermin1992space,
drager1996superspace} has also been developed to describe
quasiperiodic structures. 

The traditional idea for treating quasicrystals is using a
periodic structure to approximate the quasiperiodic structure.
The method has been applied to molecular dynamics
simulations\,\cite{dzugutov1993formation, quandt1999formation,
skibinsky1999quasicrystals, engel2007self}, Monte Carlo
simulations\,\cite{dotera2006dodecagonal}, numerical
discretization methods\,\cite{lifshitz1997theoretical}.
In other words, those methods compute crystalline approximants
rather than quasicrystals.
A natural expectation is that the obtained approximants should
approximate quasicrystals as the computational box goes to
infinity. The advantage of the approximate method is that
using the periodic boundary conditions is convenient.
However, the approach can not
obtain quasicrystals exactly unless the computational box is the
full space because of 
the restriction of the Simultaneous Diophantine Approximation
(SDA)\,\cite{davenport1946simultaneous}. 
We will further explain this point in
Sec.\,\ref{sec:numericlMethods}.
In this paper, we focus on the development of numerical methods
for generating quasicrystals to high precision,
rather than crystalline approximants.
The proposed approach is based on
the observation that the Fourier spectrum of a $d$-dimensional
quasicrystal, consists of $\delta$
peaks on a $\mathbb{Z}$-module, $\mathbf{k}=\sum_{i=1}^n h_i
\mathbf{p}_i\in\mathbb{R}^d$, $h_i\in\mathbb{Z}$, of rank $n$
($n>d$) with basis vectors
$\mathbf{p}_i$\,\cite{janssen2007aperiodic,
steurer2009crystallography}.
Therefore, an natural idea is directly computing the Fourier
spectrum of quasicrystals instead of using periodic cells
to compute crystalline approximants in real space.

\section{Lifshitz-Petrich model}

Although our proposed  method is applicable to any model including
quasicrystals. For illustrative purpose we utilize the Lifshitz-Petrich
model\,\cite{lifshitz1997theoretical} to demonstrate our
algorithm. Before we go further,
a short introduction to the Lifshitz-Petrich model is necessary. 
Lifshitz-Petrich model is a coarse-grained mean-field theory. It 
is specially appropriate for studying the phase behaviour of
soft matters involving quasicrystals\,\cite{dotera2007mean,
barkan2011stability, lifshitz2007soft}. In particular, the Lifshitz-Petrich free
free energy density functional is
\begin{align}
	F[\phi(\mathbf{r})] = \frac{1}{V}\int
	d\mathbf{r}\,\Big\{\frac{c}{2}[(\nabla^2+1)(\nabla^2+q^2)\phi]^2
	-\frac{\varepsilon}{2}\phi^2-\frac{\alpha}{3}\phi^3+\frac{1}{4}\phi^4\Big\}.
	\label{eqn:LP}
\end{align}
In Eqn.\,(\ref{eqn:LP}), $\phi(\mathbf{r})$ is the order parameter.
$V$ is the system volume. $q$ is an irrational number depending
on the symmetry. $\varepsilon$ is the reduced temperature. $c>0$
is an energy penalty to ensure that the principle reciprocal
vectors of structures is located on $|\mathbf{k}|=1$ and
$|\mathbf{k}|=q$. $\alpha>0$ is a phenomenological parameter. 
For quasicrystals the system volume $V$ should go to
infinity since quasicrystals are the space-filling structures
without periodicity.  
The most significant feature of the Lifshitz-Petrich model is
the existence of two characteristic length scales, $1$ and $q$,
which is a necessary condition to stabilize the
quasicrystals\,\cite{lifshitz1997theoretical, jiang2013theory}.
Therefore it is a suitable
model to demonstrate our proposed approach.

Theoretically, the ordered patterns, including periodic and
quasiperiodic, are corresponding to local minima of the free
energy functional of the system with respect to order parameter
$\phi$. Accordingly the order parameter $\phi^*$ is the minimum
of the free energy density functional, which means 
\begin{align}
	\frac{\delta F}{\delta \phi(\mathbf{r})}\bigg|_{\phi^*} = 0.
	\label{}
\end{align}
In order to find the equilibrium state,
we introduce a relaxational dynamical  
to minimize the Lifshitz-Petrich energy functional
which yields 
\begin{align}
	\frac{\partial\phi}{\partial t}=-\frac{\delta F}{\delta\phi} =
	-c(\nabla^2+1)^2(\nabla^2+q^2)^2\phi+\varepsilon\phi+\alpha\phi^2-\phi^3.
	\label{eqn:AC}
\end{align}
We choose a semi-implicit scheme to solve the
dynamical equation\,(\ref{eqn:AC}):
\begin{align}
	\frac{1}{\Delta t}(\phi_{t+\Delta t}-\phi_{t}) = \varepsilon
	\phi_{t} -
	c(\nabla^2+1)^2(\nabla^2+q^2)^2\phi_{t+\Delta t}
	+\alpha(\phi^2)_t - (\phi^3)_t.
	\label{eqn:semiIter}
\end{align}
$\Delta t$ is the time step size.
The specific implementation of the semi-implicit method for
different numerical methods will be discussed in
Sec.\,\ref{sec:numericlMethods}.
The current paper is devoted to
develop a numerical method to calculate quasicrystals rather
than crystalline approximants. 

\section{Numerical Methods}
\label{sec:numericlMethods}

\subsection{Crystalline Approximant Method (CAM)}
\label{subsec:cam}

\subsubsection{Method Description of CAM}
\label{subsubsec:cam}

Numerical methods designed for periodic structures
have been used to study quasicrystals
approximately\,\cite{lifshitz1997theoretical}. Here we call
this method the ``crystalline approximant method (CAM)''.
In order to describe this method, a brief introduction of
numerical methods for treating periodic structures is necessary,
more details can be found in \onlinecite{zhang2008efficient}.
For any $d$-dimensional periodic function $f(\mathbf{r})$,
$\mathbf{r}\in\mathbb{R}^d$, 
the repeated structural unit is called a unit cell. A primitive
unit cell, described by $d$ vectors $\mathbf{e}_1$,
$\mathbf{e}_2$, \dots, $\mathbf{e}_d$, has the smallest possible
volume. The Bravais lattice vector is then defined by 
\begin{align}
	\mathbf{R} = l_1\mathbf{e}_1+l_2\mathbf{e}_2+\dots +
	l_d\mathbf{e}_d,
	\label{}
\end{align}
where $\mathbf{l}=(l_1, l_2, \dots, l_d)$ is a $d$-dimensional
vector with components $l_i\in\mathbb{Z}$. For any $\mathbf{R}$
in the Bravais lattice, the structure is invariant under a
lattice translation, i.e.,
$f(\mathbf{r}+\mathbf{R})=f(\mathbf{r})$.  Given the primitive
vectors $(\mathbf{e}_1, \mathbf{e}_2,\dots,\mathbf{e}_d)$, the
primitive reciprocal vectors $(\mathbf{e}^*_1,
\mathbf{e}^*_2,\dots,\mathbf{e}^*_d)$ satisfy the equation
\begin{align}
\mathbf{e}_i\cdot\mathbf{e}^*_j = 2\pi\delta_{ij}.
	\label{eqn:dualRelation}
\end{align}
The reciprocal lattice vector is then specified by
\begin{align}
	\mathbf{k} = k_1\mathbf{e}^*_1+k_2\mathbf{e}^*_2+\dots +
	k_d\mathbf{e}^*_d,
	\label{eqn:recipvectorPS}
\end{align}
where $k_i\in\mathbb{Z}$.
One of the most important properties of the 
reciprocal lattices is that plane waves
$\{e^{i\mathbf{k}\cdot\mathbf{r}}\}$ form a set of basis functions
for any function with periodicity of the lattice.
The periodic function $f(\mathbf{r})$ on the Bravais lattice can
be expanded as 
\begin{align}
	f(\mathbf{r}) = \sum_{\mathbf{k}}
	\hat{f}(\mathbf{k})e^{i \mathbf{k}\cdot\mathbf{r}}.
	\label{eqn:periodFourier}
\end{align}
For periodic structures, the reciprocal lattice vectors have two
important features: $\mathbf{e}^*_1$, $\mathbf{e}^*_2$, \dots,
$\mathbf{e}^*_d$ are linearly independent; and
$k_i\in\mathbb{Z}$, $i = 1,2,\dots,d$.  In the direct space, the
lattice vectors have the same properties.

The main idea of CAM is to use periodic structures to
approximate quasicrystals. For a $d$-dimensional quasicrystals, 
its reciprocal lattice vectors $\mathbf{k}$ 
can be expressed by $d$ linearly independent reciprocal
vectors, $\mathbf{e}^*_1, \mathbf{e}^*_2, \dots, \mathbf{e}^*_d$,
\begin{align}
	\mathbf{k} = p_1\mathbf{e}^*_1+p_2\mathbf{e}^*_2+\dots +
	p_d\mathbf{e}^*_d.
	\label{eqn:recipvectorQC}
\end{align}
It is important to note that the $p_i\in\mathbb{R}$, usually
including irrational numbers, that is,
$\mathbf{k}$ can not be represented by linear combinations of
$\mathbf{e}^*_i$ with integer-valued coefficients.
However, the quasiperiodic function $\phi(\mathbf{r})$ can be
expanded as in the following form
\begin{align}
	\phi(\mathbf{r}) = \sum_{\mathbf{k}}
	\hat{\phi}(\mathbf{k})e^{i \cdot(L\mathbf{k})\cdot\mathbf{r}/L}, ~~~
	\mathbf{r}\in[0,2\pi L)^d.
	\label{eqn:pam:fourier}
\end{align}
If there exists a rational number $L$ such that $L p_i\in
\mathbb{Z}$
or $L p_i$ can be made arbitrarily close to a series of integers,
for all $i=1,2,\dots,d$ and $\mathbf{k}$, which means
\begin{align}
	|L\cdot(p_1, p_2, \dots, p_d)-([Lp_1], [Lp_2], \dots,
	[Lp_d])|_{l^{\infty}} \rightarrow 0, ~~\mbox{for all} ~~
	\mathbf{k}, 
	\label{eqn:sda}
\end{align}
where $[\cdot]$ rounds the number $\cdot$ to the nearest integer.  
Then numerical methods designed for
periodic structures can be used to treat quasicrystals.
Without loss of generality, we can always choose
$(1, 0, \dots, 0)$ as one of the primitive
reciprocal vectors. Therefore, the rational number $L$ becomes an
integer. The problem of determining $L$ is a
well-known problem, which deals with the approximation of real
numbers by rational numbers or integers, called the Simultaneous
Diophantine Approximation (SDA) in number
theory\,\cite{davenport1946simultaneous}. In view of numerical
computability, the integer $L$ should be as small as possible.
For simplicity, we denote the approximate error of SDA as $E_{SDA}$.

From the above description, the condition of
SDA must be satisfied for the implementation of CAM. Therefore,
there are two sources of errors of the CAM, from the
discretization, and from the Diophantine approximation.
The integer $L$ is dependent on these irrational numbers
$p_i$ due to rotational symmetry and the desired precision of the
approximation. In the subsequent numerical examples
(in Sec.\,\ref{subsec:complexity}), we will find
that $L$ increases very quickly as the desired precision
becomes higher. This greatly increases the computational cost.
For CAM, the computational
box should satisfy the condition (\ref{eqn:sda}). Therefore,
the edge length $D_\mathbf{k}$ of computational box is close to 
$D_\mathbf{k}=L\times
2\pi$, or $D=n \cdot L\times 2\pi$, $n \in \mathbb{N}^+$.

\subsubsection{An Example of 12-fold Rotational Symmetry}
\label{subsubsec:cam:example}
\begin{figure}[!hbtp]
	\centering    
		\includegraphics[scale=0.5]{./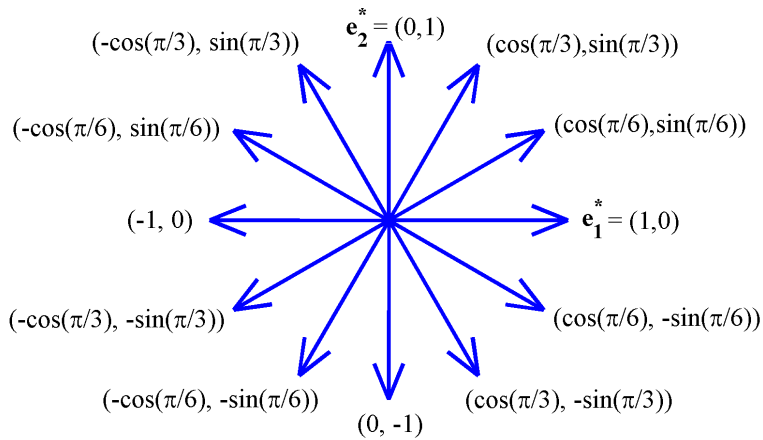}
    \caption{\label{fig:pamscheme}
	Sets of reciprocal lattice vectors $\{\mathbf{k}\}$ 
	of 12-fold rotational symmetry in 2-dimensional space.
	}
\end{figure}
In order to make the above method clearer,
we take an example of 12-fold rotational symmetry
in two-dimensions to demonstrate this point. A schematic plot in the
reciprocal space is shown in Fig.\,\ref{fig:pamscheme}.
Let two noncollinear vectors $\mathbf{e}^*_1=(1,0)$ 
and $\mathbf{e}^*_2=(0,1)$ comprise the primitive 
reciprocal vectors, other reciprocal vectors of the 12-fold
case are represented as linear combinations of
$\mathbf{e}^*_1$ and $\mathbf{e}^*_2$ with real number
coefficients, as shown in Fig.\,\ref{fig:pamscheme}. 
If other two noncollinear reciprocal vectors are chosen as 
primitive reciprocal vectors, similar results will emerge as
well. The distinct nonzero coefficients for these $12$ vectors
are $1$, $1/2$, and $\sqrt{3}/2$.
When applying CAM to this example of 12-fold vectors,
we need an integer $L$ such that
\begin{align}
	\Big|L\cdot \Big(1, \frac{\sqrt{3}}{2}, \frac{1}{2}\Big) - \Big(L,
	\Big[\frac{\sqrt{3}}{2}L\Big],
	\Big[\frac{1}{2}L\Big]\Big)\Big|_{l^{\infty}} \rightarrow 0.
	\label{eqn:sda:DDQC}
\end{align}

\subsubsection{Applying CAM to Lifshitz-Petrich Method}
\label{subsubsec:cam:lp}

In CAM, we indeed use crystalline approximants to approximate
quasicrystals. By substituting Eqn\,(\ref{eqn:pam:fourier})
into\,(\ref{eqn:LP}), the Lifshitz-Petrich free energy density
functional becomes  
\begin{align}
	\nonumber
	F[\phi] =
	\frac{1}{2}&\sum_{\mathbf{k}_1+\mathbf{k}_2=0}[c(1-|\mathbf{k}|^2)^2(q^2-|\mathbf{k}|^2)^2-\varepsilon]\hat{\phi}(\mathbf{k}_1)\,\hat{\phi}(\mathbf{k}_2)
	\\
	&-\frac{\alpha}{3}\sum_{\mathbf{k}_1+\mathbf{k}_2+\mathbf{k}_3=0}\hat{\phi}(\mathbf{k}_1)\,\hat{\phi}(\mathbf{k}_2)\,\hat{\phi}(\mathbf{k}_3)
	+\frac{1}{4}\sum_{\mathbf{k}_1+\mathbf{k}_2+\mathbf{k}_3+\mathbf{k}_4=0}\hat{\phi}(\mathbf{k}_1)\,\hat{\phi}(\mathbf{k}_2)\,\hat{\phi}(\mathbf{k}_3)\,\hat{\phi}(\mathbf{k}_4).
	\label{eqn:LPperiodFour}
\end{align}
Then we use the semi-implicit scheme (\ref{eqn:semiIter}) 
to minimize the free energy density functional. 
For CAM, the semi-implicit method becomes 
\begin{align}
	\left(\frac{1}{\Delta
	t}+c(1-\mathbf{k^2})^2(q^2-\mathbf{k^2})^2\right)\hat{\phi}_{t+\Delta
	t}(\mathbf{k})= \left(\frac{1}{\Delta
	t}+\varepsilon\right)\hat{\phi}_t(\mathbf{k}) + \alpha
	(\phi^2_t)(\mathbf{k})-(\phi^3_t)(\mathbf{k}),
	\label{eqn:semiPAM}
\end{align}
where $ (\phi^m_t)(\mathbf{k})=\int
d\mathbf{r}\,\phi^m(\mathbf{r})e^{-i\mathbf{k}\mathbf{r}}$,
$m =2,3$. The right terms of the dynamics can be efficiently
calculated by the pseudospectral
method\,\cite{zhang2008efficient}. The Laplacian terms are
computed in $d$-dimensional reciprocal space easily, while the
convolutions can be calculated efficiently by Fast-Fourier
transformation (FFT). Then the computational complexity is
$O(N\log N)$ at each time step, $N$ is the number of degrees of freedom.

\subsection{Projection Method (PM)}
\label{subsec:pm}

\subsubsection{Method Description of PM}
\label{subsubsec:pm}

As mentioned above, CAM computes the crystalline approximants.
However, the approximation method can not evaluate the free
energy density exactly because of the approximate error of SDA.
Therefore it is necessary to design a new numerical method
that improves the calculation of quasicrystals by avoiding the
approximate error of SDA.
From the higher-dimensional
description\,\cite{steurer2009crystallography}, there exists an
equivalent $n$-dimensional representation of a $d$-dimensional
quasicrystal ($n>d$). 
The reciprocal vectors $\mathbf{k}\in\mathbb{R}^d$ of a $d$-dimensional
quasicrystal are 
\begin{align}
	\mathbf{k}= h_1\mathbf{p}^*_1 + h_2\mathbf{p}^*_2 + \dots
	+h_n\mathbf{p}^*_n,
	~~~h_i\in\mathbb{Z}, 
	\label{eqn:hdm:d}
\end{align}
with vectors $\mathbf{p}^*_i\in\mathbb{R}^d$ of rank $n$.
In order to develop the new numerical method, 
we firstly redefine the selected reciprocal vectors
$\mathbf{p}^*_i$ with $n$ components.
Then the quasicrystal can be represented by these redefined
basic reciprocal vectors with integral coefficients in
$n$-dimensional space. This representation has the advantage,
for our purpose, that the $d$-dimensional quasicrystal is
periodic in $n$-dimensions. 
Based on the redefined reciprocal vectors, we
introduce the $n$-dimensional reciprocal lattice. Assume that
$n$-dimensional vectors, 
$\mathbf{b}^*_1, \mathbf{b}^*_2, \dots, \mathbf{b}^*_n$,
are the primitive reciprocal vectors of a $1$st Brillouin zone
in $n$-dimensional reciprocal space, the reciprocal vector of
$n$-dimensional periodic structure can be expressed as  
\begin{align}
	\mathbf{H} = h_1\mathbf{b}^*_1+h_2\mathbf{b}^*_2+\dots +
	h_n\mathbf{b}^*_n,
	\label{eqn:hdm:n}
\end{align}
where the coefficient $h_i\in\mathbb{Z}$, and
$\mathbf{H}\in\mathbb{R}^n$.
The correspondingly primitive
vectors of the direct space $\mathbf{b}_i$ satisfy the dual
relationship (\ref{eqn:dualRelation}).
We want to use the $n$-dimensional reciprocal vector $\mathbf{H}$ 
to represent the $d$-dimensional quasicrystal.
Then we can solve a $d$-dimensional quasicrystal as a periodic
structure in $n$-dimensional space.
The key point of implementing the above idea is to provide
an operator to project the $n$-dimensional structure into
$d$-dimensional space.

In order to solve the problem, we propose a novel Fourier
expansion for the $d$-dimensional quasiperiodic function 
\begin{align}
	g(\mathbf{r}) =
	\sum_{\mathbf{H}}\hat{g}(\mathbf{H})e^{i[(\mathcal{S}\cdot\mathbf{H})^{T}\cdot\mathbf{r}]}.
	\label{eqn:qcFourier}
\end{align}
In the expression, $\mathbf{r}\in\mathbb{R}^d$,
$\mathbf{H}\in\mathbb{R}^n$,
and $\mathcal{S}$ is the \textit{projective
matrix} which connects the $d$-dimensional physical 
space with the $n$-dimensional reciprocal space. 
We note that the two representations (\ref{eqn:hdm:d}) and
(\ref{eqn:hdm:n}) can be used to describe the same quasicrystal. The
reciprocal vectors of a $d$-dimensional quasicrystal can be
represented by $d$-dimensional reciprocal vectors
$\mathbf{p}^*_i$ with integral coefficients, and also the
integral combinations of extended $n$-dimensional primitive reciprocal
vectors $\mathbf{b}_i^*$. Therefore, we can obtain the projective
matrix $\mathcal{S}$ through projecting the $n$-dimensional
reciprocal vectors $\mathbf{b}_i$ into $d$-dimensional
reciprocal space, i.e., $\mathbf{p}_i = (\mathcal{S}\cdot\mathbf{b})_i$,
$i=1,2,\dots,n$. The $j$-th component $p^*_{ij}$ of the projected
reciprocal vector $\mathbf{p}_i^*$ can be expressed by $\mathbf{b}^*_i$ 
\begin{align}
	p^*_{ij} = \sum_{m=1}^n s_{jm} b^*_{im}, ~~~ j = 1,2,\dots,d, 
	\label{}
\end{align}
where $b^*_{im}$ is the $m$-th component of the $i$-th $n$-dimensional
reciprocal vector $\mathbf{b}^*_i$. These coefficients
$s_{jm}\in\mathbb{R}$, $j=1,2,\dots,d$, $m=1,2,\dots,n$, form
the $d\times n$-order nonzero projective matrix $\mathcal{S}$
which reflects the symmetry of quasicrystals.
For the least computational expense, the dimension
$n$ of the extended space must be the smallest determined by the
order of the elements of the symmetric
group\,\cite{steurer2009crystallography,
hiller1985crystallographic}. For example, 5-, 8-, 10-, and
12-fold symmetric quasicrystals, the dimension of embedded space
is four. However, 7-, 9-, and 18-fold symmetric quasicrystals
must be restricted to six-dimensional space.
The projective matrix $\mathcal{S}$ is not unique, which is
determined by the selection of reciprocal vectors $\mathbf{b}^*_i$. 
The represented coefficients of reciprocal vectors are dependent
on the selection of primitive reciprocal vectors,
however, the reciprocal vectors of a quasicrystal are unique. 
Therefore, the selection of basic reciprocal vectors
$\mathbf{b}_i^*$ as well as the projective matrix $\mathcal{S}$
is irrelevant to the quasicrystal.  Considering a periodic
structure, the projective matrix $\mathcal{S}$ degenerates to a
$d\times d$-order unit matrix.

Furthermore, PM can be extended to calculate
one-dimensional quasicrystals even though the notion of
rotational symmetry does not exist in one dimension.
If an energy functional has one-dimensional
quasiperiodic structures with different incommensurate scales,
for example, 
\begin{align}
	g(x) = C_0 \sin(x) + C_1 \sin(q_1 x) + C_2\sin(q_2 x) +
	\cdots,
	\label{eqn:hdm:1d}
\end{align}
where the common multiples of $1, q_1, q_2, \dots$, are irrational
numbers in pairs, the projective matrix becomes a vector
\begin{align}
	\mathcal{S} = (1, q_1, q_2, \dots).
	\label{}
\end{align}
The present approach can be applied to calculate
one-dimensional quasicrystals.

In the following, a lemma is given to
indicate which variable should be computed in PM.
\begin{lemma}
	\label{lem}
	For a $d$-dimensional quasiperiodic function $g(\mathbf{r})$,
	under the expansion (\ref{eqn:qcFourier}), 
	we have 
\begin{align}
	\lim_{V\rightarrow\infty}\frac{1}{V}\int
	d\mathbf{r}\,g(\mathbf{r}) =
	\hat{g}(\mathbf{H})\Big|_{\mathbf{H}=0}.
	\label{eqn:pm:energy:I}
\end{align}
\end{lemma}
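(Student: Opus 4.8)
The plan is to substitute the generalized Fourier expansion (\ref{eqn:qcFourier}) directly into the spatial average, interchange the infinite sum with the integral, and evaluate each resulting term in the limit $V\to\infty$. Taking $V=L^d$ with the cube $\Omega_L=[0,L)^d$ and writing $\mathbf{k}(\mathbf{H})=\mathcal{S}\cdot\mathbf{H}\in\mathbb{R}^d$ for the projected wavevector, the identity to be established reads
\begin{align}
\frac{1}{V}\int_{\Omega_L} d\mathbf{r}\, g(\mathbf{r}) = \sum_{\mathbf{H}} \hat{g}(\mathbf{H})\, I_L\big(\mathbf{k}(\mathbf{H})\big), \qquad I_L(\mathbf{k}):=\frac{1}{L^d}\int_{\Omega_L} e^{i\,\mathbf{k}\cdot\mathbf{r}}\,d\mathbf{r}.
\end{align}

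First I would compute the elementary kernel explicitly, $I_L(\mathbf{k})=\prod_{j=1}^{d}\frac{e^{ik_jL}-1}{i k_j L}$, and record its two features: $|I_L(\mathbf{k})|\le 1$ for every $\mathbf{k}$ and every $L$, while for $\mathbf{k}\neq 0$ at least one coordinate factor decays like $1/L$, so that $I_L(\mathbf{k})\to 0$ as $L\to\infty$; for $\mathbf{k}=0$ one has $I_L(0)=1$ identically. Hence the term-wise limit is the indicator $\lim_{L\to\infty} I_L(\mathbf{k}(\mathbf{H}))=\mathbf{1}\{\mathbf{k}(\mathbf{H})=0\}$.

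The decisive structural point, and the step I expect to be the crux, is that $\mathbf{k}(\mathbf{H})=\mathcal{S}\cdot\mathbf{H}$ vanishes only for $\mathbf{H}=0$ among the indices of the sum. Since $\mathcal{S}\cdot\mathbf{H}=\sum_{i}h_i\,\mathbf{p}^*_i$, this is exactly the statement that the projected reciprocal vectors $\mathbf{p}^*_i=\mathcal{S}\cdot\mathbf{b}^*_i$ are rationally independent, i.e.\ $\sum_i h_i\mathbf{p}^*_i=0$ with $h_i\in\mathbb{Z}$ forces every $h_i=0$. This rank-$n$ condition on the $\mathbb{Z}$-module is precisely what characterises a genuine quasicrystal rather than a crystalline approximant, and it guarantees that no spurious mode $\mathbf{H}\neq 0$ survives the limit.

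To make the interchange of $\lim_{L\to\infty}$ and $\sum_{\mathbf{H}}$ rigorous, I would appeal to dominated convergence on the counting measure: each summand is bounded in modulus by $|\hat{g}(\mathbf{H})|$ uniformly in $L$ because $|I_L|\le 1$, and for a smooth quasiperiodic density the coefficients decay rapidly so that $\sum_{\mathbf{H}}|\hat{g}(\mathbf{H})|<\infty$ supplies a summable $L$-independent majorant. Exchanging limit and sum then collapses the series to the single surviving term, giving $\lim_{V\to\infty}\frac{1}{V}\int d\mathbf{r}\,g(\mathbf{r})=\hat{g}(\mathbf{H})\big|_{\mathbf{H}=0}$. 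The two obstacles to watch are thus the summability (or uniform-convergence) hypothesis needed to justify the exchange, and the rational independence of the projected basis; the latter is the genuinely quasicrystalline ingredient and is what distinguishes PM from CAM.
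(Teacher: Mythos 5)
Your proposal follows essentially the same route as the paper's own proof: term-by-term averaging of the plane waves $e^{i(\mathcal{S}\cdot\mathbf{H})^{T}\cdot\mathbf{r}}$ over a growing box, which kills every mode except those with $\mathcal{S}\cdot\mathbf{H}=0$, followed by the observation that $\mathcal{S}\cdot\mathbf{H}=\sum_i h_i\mathbf{p}^*_i=0$ with integer $h_i$ forces $\mathbf{H}=0$ by the rank-$n$ (rational independence) property of the projected basis. Your version is simply more careful than the paper's, which states the averaging identity as a $\delta$-function fact and does not address the interchange of limit and sum that you justify via the uniform bound $|I_L|\le 1$ and summability of the coefficients.
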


\begin{proof}
	Firstly, we note that 
\begin{align}
	\lim_{V\rightarrow\infty}\frac{1}{V}\int
	d\mathbf{r}\,\exp\Big\{(\mathcal{S}\cdot\mathbf{H})\cdot\mathbf{r}\Big\}
	= \delta(\mathcal{S}\cdot\mathbf{H}).
\end{align}
Therefore, 
\begin{align}
	\lim_{V\rightarrow\infty}\frac{1}{V}\int
	d\mathbf{r}\,g(\mathbf{r}) =
	\lim_{V\rightarrow\infty}\frac{1}{V}\int d\mathbf{r}\,
	\sum_{\mathbf{H}}\hat{g}(\mathbf{H})e^{i[(\mathcal{S}\cdot\mathbf{H})^{T}\cdot\mathbf{r}]}
	= 
	\hat{g}(\mathbf{H})\Big|_{\mathcal{S}\cdot\mathbf{H}=0}.
	\label{eqn:pm:energy:II}
\end{align}
Then we just need to prove that the expressions (\ref{eqn:pm:energy:I})
and (\ref{eqn:pm:energy:II}) are equivalent. From the definition
of (\ref{eqn:hdm:n}), $n$-dimensional reciprocal vector
$\mathbf{H}$ is the integer-valued combinations of linearly
independent primitive reciprocal vectors $\mathbf{b}^*_i$,
$i=1,2,\dots,n$. Because the projective matrix $\mathcal{S}$
is linear, the $d$-dimensional projected vector
$\mathcal{S}\cdot\mathbf{H}$ is also the integer-valued
combinations of the $d$-dimensional projected vectors
$\mathbf{p}^*_i=\mathcal{S}\cdot\mathbf{b}^*_i$. 
$\mathcal{S}\cdot\mathbf{H}=0$ is equivalent to the integral
coefficients $h_i$ in Eqn.\,(\ref{eqn:pm:energy:II}) to be zero.
It means that the $n$-dimensional reciprocal vector $\mathbf{H}=0$. 
\end{proof}
\begin{remark}
From Lemma 1, the Fourier coefficients
$\hat{g}(\mathbf{H})$ rather than
$\hat{g}(\mathcal{S}\cdot\mathbf{H})$ should be computed in PM.
\end{remark}

In the PM, a quasicrystal is computed in $n$-dimensional
reciprocal space as a periodic structure, then the
$n$-dimensional structure is projected into $d$-dimensional space
to obtain the $d$-dimensional quasicrystal through projective
matrix. Since different periodic phases have their own
periodicity, the appropriate computational box is important to
determine the final morphology of solutions, especially for
complex phases. For example, in diblock copolymer
systems\,\cite{jiang2013discovery}, the lamellae phase can be
obtained easily in any computational region, however, for complex 
gyroid pattern, the computational box should be close to its period.
For more complex quasicrystal, the computational box
should be estimated carefully before computing. 
We also note that a equilibrium periodic
structure is not only the minimum of a free energy density
$F$ with respect to order parameters, but also
with respect to the unit cell\,\cite{zhang2008efficient}.
Therefore, the unit cell in $n$-dimensional space,
$\mathcal{B}=[\mathbf{b}_1,\mathbf{b}_2, \dots, \mathbf{b}_n]$,
should satisfy $\partial F/\partial\mathcal{B} =
0$. We will give a estimation formula for the unit cell of
quasicrystals in the Lifshitz-Petrich model (see
Sec.\,\ref{subsubsec:pm:lp}). 

With the help of $n$-dimensional reciprocal space, the PM can
calculate the spectrum of quasicrystals directly.
In PM, the physical space variable
$\mathbf{r}$ always belongs to $d$-dimensional space.
Therefore an energy functional including
$d$-dimensional quasicrystals should not be up to 
describing $n$-dimensional structures.
The PM is able to compute quasicrystals rather than
crystalline approximants. The PM is also using the periodic
condition conveniently in higher-dimensional reciprocal space .	
Compared with CAM,
the most important advantage of PM is that
the method overcomes the restriction of SDA. Therefore PM can
compute the free energy density to high accuracy numerically. 
Meanwhile, the proposed method can implement in a $n$-dimensional
reciprocal unit cell to reduce computational cost. 

\subsubsection{An Example of 12-fold Rotational Symmetry}
\label{subsubsec:pm:example}

As an example, we take the 12-fold rotational symmetry in two
dimensions to illustrate the idea of PM. 
\begin{figure}[!hbtp]
	\centering    
		\includegraphics[scale=0.5]{./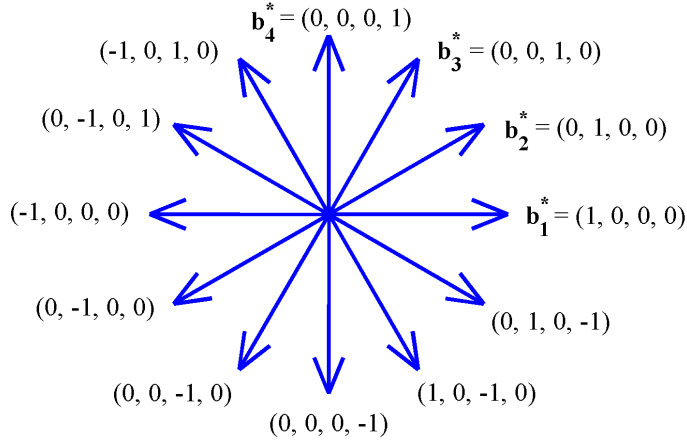}
    \caption{\label{fig:hdmscheme}
	Sets of reciprocal lattice vectors $\{\mathbf{H}\}$ 
	of the 12-fold rotational symmetry with 4 components.
	}
\end{figure}
As Sec.\,\ref{subsubsec:cam:example} discussed,
the $12$ reciprocal lattice vectors can not be represented
by two noncollinear vectors $\mathbf{e}^*_1$ and $\mathbf{e}^*_2$
with integral coefficients.
However, as Fig.\,\ref{fig:hdmscheme} shows, if the vectors of
$\mathbf{b}_1^*$, $\mathbf{b}_2^*$, $\mathbf{b}_3^*$, and
$\mathbf{b}_4^*$ with $4$ components are chosen as primitive
reciprocal vectors, other reciprocal lattice vectors can be
represented as integral combinations of these primitive vectors.
The selected $4$ vectors make up a basis in $4$-dimensional space.
The $2$-dimensional 12-fold example is related to a periodic
structure in $4$-dimensional reciprocal space. 
The projected vectors $\mathbf{p}^*_i =
\mathcal{S}\cdot\mathbf{b}_i^*$,
$i=1,2,3,4$ also can be expressed by $\mathbf{e}^*_1$,
$\mathbf{e}^*_2$ in $2$-dimensional space, i.e.,
\begin{align}
	p^*_{i1} = \cos\frac{(i-1)\pi}{6}e^*_{11}, ~~~~ 
	p^*_{i2} = \sin\frac{(i-1)\pi}{6}e^*_{22}.
	\label{}
\end{align}
Therefore, the projective matrix is
\begin{equation}
	\mathcal{S} =\left(
	\begin{array}{cccc}
		1 & \cos(\pi/6) & \cos(\pi/3) & 0 \\
		0 & \sin(\pi/6) & \sin(\pi/3) & 1 
	\end{array}
\right).
\label{eqn:DDQC:projMatrix}
\end{equation}

\subsubsection{Applying PM to Lifshitz-Petrich Method}
\label{subsubsec:pm:lp}
Applying PM to the Lifshitz-Petrich model in two dimensions, the
order parameter $\phi(x,y)$ is expanded as
\begin{align}
	\phi(x,y) =
	\sum_{\mathbf{H}}\hat{\phi}({\mathbf{H}})\exp\Big\{i(\mathcal{S}\cdot\mathbf{H})^{T}\cdot(x,y)^{T}\Big\},
	\label{}
\end{align}
where $\mathbf{H}$ is a $n$-dimensional vector, $\mathcal{S}$ is
a ${2\times n}$-order projective matrix. 
For quasicrystals, the system volume $V$ should go to infinity.
Based on Lemma 1, 
the Lifshitz-Petrich free energy density functional becomes  
\begin{align}
	\nonumber
	F[\phi(x,y)] &=
	\frac{1}{2}\sum_{\mathbf{H}_1+\mathbf{H}_2=0}\{c[1-(g_1^2+g_2^2)]^2[q^2-(g_1^2+g_2^2)]^2-\varepsilon\}\hat{\phi}(\mathbf{H}_1)\,\hat{\phi}(\mathbf{H}_2)
	\\
	&\hspace{-0.3cm} -\frac{\alpha}{3}\sum_{\mathbf{H}_1+\mathbf{H}_2+\mathbf{H}_3=0}\hat{\phi}(\mathbf{H}_1)\,\hat{\phi}(\mathbf{H}_2)\,\hat{\phi}(\mathbf{H}_3)
	+\frac{1}{4}\sum_{\mathbf{H}_1+\mathbf{H}_2+\mathbf{H}_3+\mathbf{H}_4=0}
	\hat{\phi}(\mathbf{H}_1)\,\hat{\phi}(\mathbf{H}_2)\,\hat{\phi}(\mathbf{H}_3)\,\hat{\phi}(\mathbf{H}_4),
	\label{eqn:LPqcFour}
\end{align}
where $g_1$ and $g_2$ are defined by
\begin{align}
	\mathcal{S}\cdot \mathbf{H} = \Big(
	\sum_{i=1}^n s_{1i}\sum_{j=1}^n h_j b_{ji},
	\sum_{i=1}^n s_{2i}\sum_{j=1}^n h_j b_{ji}
	\Big)^{T}\stackrel{\Delta}{=} (g_1, g_2)^{T},
	\label{eqn:hdm:laplace}
\end{align}
$h_j\in\mathbb{Z}$, $s_{1j}$ and $s_{2j}$ are the components of
the projective matrix $\mathcal{S}$, and $b_{ji}$ is the
$i$-th component of the primitive reciprocal vector $\mathbf{b}_j$.
The $n$-dimensional Fourier
coefficient $\hat{\phi}(\mathbf{H})$ can be solved by 
the semi-implicit method\,(\ref{eqn:semiIter}) 
\begin{align}
	\nonumber
&	\left(\frac{1}{\Delta
	t}+c\left(1-g_1^2-g_2^2\right)^2 \left(q^2-g_1^2-g_2^2\right)^2\right)\hat{\phi}_{t+ \Delta
	t}(\mathbf{H})
	\\
&	\hspace{3cm} = \left(\frac{1}{\Delta
	 t}+\varepsilon\right)\hat{\phi}_t(\mathbf{H}) + \alpha
	(\phi^2_t)(\mathbf{H})-(\phi^3_t)(\mathbf{H}),
	\label{eqn:semiPM}
\end{align}
where the quadratic term and the third term are 
\begin{align}
(\phi^2_t)(\mathbf{H})=\sum_{\mathbf{H_1}+\mathbf{H}_2=\mathbf{H}}
\hat{\phi}_t(\mathbf{H}_1)\,\hat{\phi}_t(\mathbf{H}_2),
~~~
(\phi^3_t)(\mathbf{H})=\sum_{\mathbf{H_1}+\mathbf{H}_2+\mathbf{H}_3=\mathbf{H}}
\hat{\phi}_t(\mathbf{H}_1)\,\hat{\phi}_t(\mathbf{H}_2)\,
\hat{\phi}_t(\mathbf{H}_3).
	\label{}
\end{align}
In Eqn.\,(\ref{eqn:semiPM}), the linear terms can be solved easily.
The nonlinear terms are $n$-dimensional convolutions in
reciprocal space. Directly computing will result in expansive
computational cost. However these convolutions are local point
multiplication in $n$-dimensional direct space.
Therefore we use the pseudospectral
method\,\cite{zhang2008efficient} to treat these terms by FFT. 
It should be emphasized that the PM is implemented in
$n$-dimensional reciprocal space instead of in $d$-dimensional
physical space. Computing these convolutions in 
$n$-dimensional direct space is to reduce computational complexity.

For PM, we give a method to estimate the $n$-dimensional unit cell
for the Lifshitz-Petrich model\,\cite{jiang2013discovery}. 
Without loss of generality, we can choose a proper coordinate
system such that $b_{ii}\neq 0$, $b_{ij}=0$, when $j>i$.
If $\mathcal{B}=[\mathbf{b}_1, \mathbf{b}_2, \dots,
\mathbf{b}_n]$ is a $n$-dimensional unit cell of a
$d$-dimensional quasicrystal. 
The first deviations of the free energy
functional with respect to $b_{ij}$ should be zero, where
$i=1,2,\dots n$, $j=1,2,\dots,i$, i.e.,
\begin{align}
	\sum_{\mathbf{H}}
	h_{j}(g_1 s_{1 i} + g_2 s_{2 i})
	[1-(g_1^2+g_2^2)][q^2-(g_1^2+g_2^2)]
	[1+q^2-2(g_1^2+g_2^2)]\cdot
	|\hat{\phi}(\mathbf{H})|^2 = 0.
	\label{eqn:box}
\end{align}
Then the unit cell can be obtained by solving the
Eqn.\,(\ref{eqn:box}). 
Since the Fourier coefficients $\phi_{\mathbf{H}}$ are
unknown beforehand. In practice, only primary reciprocal
vectors with equal Fourier coefficients because of symmetries are
considered in estimating the unit cell. Therefore the Fourier
coefficients $\hat\phi(\bm H)$ in \eqref{eqn:box} will be
cancelled and make no impression on evaluating the size of box.

\subsection{Computational Complexity}
\label{subsec:complexity}

In this section, we will give a general analysis on the
computational complexity of CAM and PM in solving the 
two-dimensional Lifshitz-Petrich model.
The computational complexity of CAM is dependent on the approximate
accuracy of SDA\,(\ref{eqn:sda}) and the numerical precision.
However, the PM overcomes the restriction of SDA, whose
computational complexity is only dependent on the numerical accuracy.
As described in Sec\,\ref{subsubsec:cam} and
Sec.\,\ref{subsubsec:pm}, both CAM and PM can use pseudospectral
method in computing the two-dimensional Lifshitz-Petrich model.
The number of basic functions used in the reciprocal space can be
equivalent to the number of discretized points in the direct
space. In order to guarantee the same numerical precision, 
we assume that the mesh step size is $\Delta x$ both for CAM
($2$ dimensions) and PM ($n$ dimensions) in the direct space.

In the Lifshitz-Petrich model, the CAM is
implemented in two dimensions. 
At a desired approximate error $E_{SDA}$ as well as the integer
$L$, the number of the plane waves is $N_{\mathbf{k}}\times N_{\mathbf{k}}$, 
\begin{align}
	N_{\mathbf{k}} = \left[\frac{D_{\mathbf{k}}}{\Delta
	x}\right], ~~D_{\mathbf{k}} = 2\pi \times L,
	\label{}
\end{align}
where $[\cdot]$ rounds the number $\cdot$ to the nearest integer.  
The computational complexity of CAM
\begin{align}
	O\left(N_{t,\mathbf{k}}\cdot 2N_{\mathbf{k}}^2 \log
	N_{\mathbf{k}}\right),
	\label{eqn:complexity:cam}
\end{align}
where $N_{t,\mathbf{k}}$ is the number of time iterations.

In PM, the $2$-dimensional quasicrystal is represented in
$n$-dimensional reciprocal space as a periodic structure.
The dimension of $n$ is dependent on the rotational symmetry
of a quasicrystal. 
Therefore the $2$-dimensional quasicrystal can be computed in
a $n$-dimensional unit cell.
We assume that the size of the $n$-dimensional
unit cell is $D_{\mathbf{H}}$ in the direct space. 
The number of the plane-wave functions is $N_{\mathbf{H}}^n$, with 
\begin{align}
	N_{\mathbf{H}} = \left[\frac{D_{\mathbf{H}}}{\Delta
	x}\right].
	\label{}
\end{align}
As discussed in Sec.\,\ref{subsubsec:pm},
the pseudospectral method in PM is used to solve
the dynamical equation\,(\ref{eqn:semiPM}) in $n$-dimensional
reciprocal space with the help of FFT.
Therefore the computational complexity of PM is 
\begin{align}
	O\left(N_{t,\mathbf{H}}\cdot n N_{\mathbf{H}}^n \log
	N_{\mathbf{H}}\right),
	\label{eqn:complexity:pm}
\end{align}
where $N_{t,\mathbf{H}}$ is the number of time iterations.

From the estimation formulas (\ref{eqn:complexity:cam}), the
computational complexity of CAM is a function of the integer
$L$, which is dependent upon the approximate error of SDA. 
In next section, we will find that $D_{\mathbf{k}}
\gg D_{\mathbf{H}}$ as the approximate error of SDA decreases.
The computational complexity of CAM may be larger than
that of PM in a situation where higher numerical accuracy is
required. More importantly, PM can compute quasicrystals and
their free energy density to high accuracy without any
approximate error of SDA. On the contrary, CAM always has the approximate
error of SDA unless the computational box goes to infinity which
results in unaccepted computational cost. 

\section{Numerical Results and Discussion}
\label{sec:numericalResults}

We will demonstrate the behavior of the two numerical methods,
CAM and PM, based on the two-dimensional Lifshitz-Petrich
model. In previous research\,\cite{lifshitz1997theoretical}, 
the Lifshitz-Petrich model showed that if
$q$ is chosen around $2\cos(\pi/12)$ one can obtain a
2-dimensional quasicrystal with dodecagonal (12-fold)
symmetry. The earlier work found that no choice of $q$ yields
globally stable octagonal or decagonal symmetric pattern. 
However, recent study\,\cite{jiang2013theory} manifests that the
decagonal quasicrystal is stable in the Lifshitz-Petrich model.
In this work we just consider the dodecagonal symmetric structure.

\subsection{Computational Complexity of Dodecagonal Symmetric
Structure}
\label{subsec:result:complexity}

For the dodecagonal quasicrystal (DDQC), the initial reciprocal
vectors at which the Fourier coefficients are nonzero, are shown
in Fig.\,\ref{fig:12fold:initial}. They contain two 12-fold stars
of wave vectors, one on $|\mathbf{k}|=1$ and other on
$|\mathbf{k}|=q=2\cos(\pi/12)$\,\cite{lifshitz1997theoretical}. 
\begin{figure}[!htp]
	\centering
		\includegraphics[scale=0.5]{./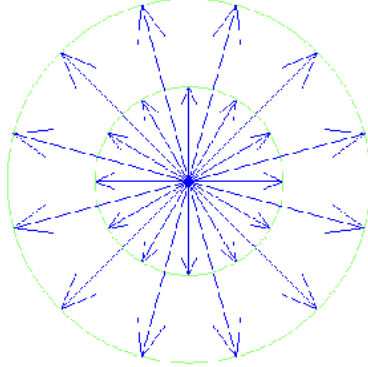}
	\caption{\label{fig:12fold:initial} Initial reciprocal lattice vectors
	for DDQC.}
\end{figure}
\begin{table*}[!htbp]
\caption{\label{tab:DDQC:initial2d} Initial reciprocal lattice vectors
for DDQC in $2$-dimensional space with $q=2\cos(\pi/12)$.}
\centering
\begin{tabular}{|c|c|}
\hline
$ |\mathbf{k}| = 1$ & ($\cos(j\pi/6)$,$\sin(j\pi/6)$), $j =
0,1,\dots, 11$
	\\ \hline
$ |\mathbf{k}| = q$ &($q\cos(j\pi/6+\pi/12)$, $q\sin(j\pi/6+\pi/12)$), $j =
0,1,\dots, 11$
\\\hline
\end{tabular}
\end{table*}
The specific reciprocal vectors are given in Tab.\,\ref{tab:DDQC:initial2d}
when $\mathbf{e}^*_1 = (1,0)$ and $\mathbf{e}^*_2=(0,1)$, as
shown in Fig\,\ref{fig:pamscheme}, are chosen as the basic
reciprocal vectors in 2 dimensions. The distinct nonzero
represented coefficients of these initial reciprocal vectors for
DDQC phase are $1$, $1/2$, $\sqrt{3}/2$,
$2\cos(\pi/12)$, $\cos(\pi/12)$ and $\sqrt{3}\cos(\pi/12)$.
The condition of SDA (\ref{eqn:sda}) must be satisfied when one uses CAM.
Fig.\,\ref{fig:SDA} shows the trend of the approximate error of
SDA, $E_{SDA}$, as the integer $L$ increases. 
\begin{figure}[!htp]
	\centering
		\includegraphics[scale=0.7]{./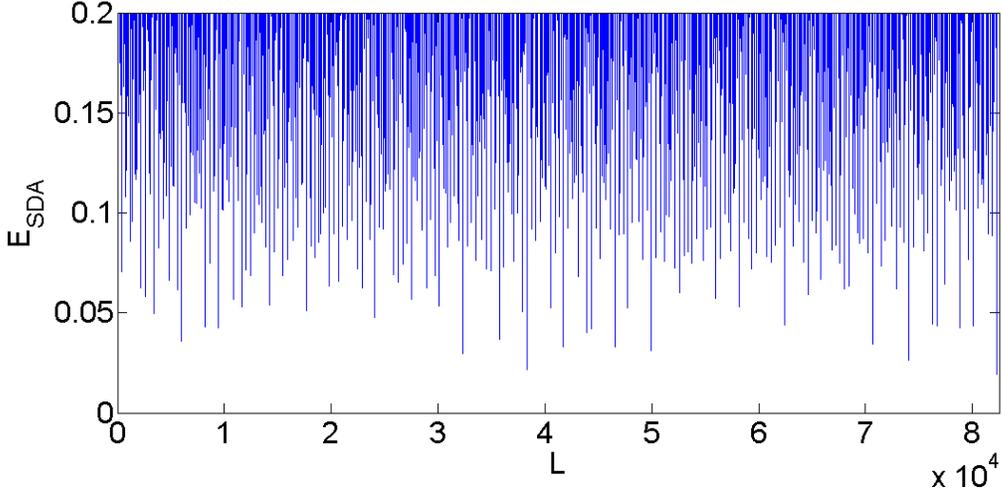}
	\caption{\label{fig:SDA} 
	The trend of the approximate error of SDA, $E_{SDA}$,
	as a function of the integer $L$.
	}
\end{figure}
The convergence of $E_{SDA}$ is not uniform.
\begin{table*}[!htbp]
\caption{\label{tab:DDQC:SDA} 
The minimal integer $L$ for desired approximate error of SDA, $E_{SDA}$.
}
\centering
\begin{tabular}{|c|c|c|c|c|c|c|c|}
\hline
$E_{SDA}$ & 0.19098 & 0.17486 & 0.07042 & 0.04953 & 0.03583 & 0.02961 &
0.01936
\\\hline
$L$ & 30  & 208& 410 &  3404 & 6016 & 32312 &  82262
\\\hline
\end{tabular}
\\
The size of computational box used in CAM is $D_\mathbf{k}=2\pi L$. 
\end{table*}
Tab.\,\ref{tab:DDQC:SDA} gives the minimal integers for desired
approximate error $E_{SDA}$, and 
the corresponding size $D_\mathbf{k}$ of computational box used in CAM.
Then the initial reciprocal vectors in CAM are
$[L\cdot\mathbf{k}]$. From the table, we find that $L$ increases
quickly as the approximate error becomes small.  
Accordingly, the computational cost will increase
greatly. We also find the $E_{SDA}\approx 0.19098$ 
($L=30$) is the least requirement for computing dodecagonal
symmetric structure, it is consistent with the result in
Ref.\,\cite{lifshitz1997theoretical}.
The initial reciprocal vectors in $4$-dimensional space
representation of PM are shown in
Tab.\,\ref{tab:DDQC:initialPM} when the primitive reciprocal
vectors are $\mathbf{b}^*_1=(1,0,0,0)$,
$\mathbf{b}^*_2=(0,1,0,0)$, $\mathbf{b}^*_3=(0,0,1,0)$, and
$\mathbf{b}^*_4=(0,0,0,1)$, as shown in Fig\,\ref{fig:hdmscheme}.
\begin{table*}[!htbp]
\caption{\label{tab:DDQC:initialPM} Initial reciprocal lattice
vectors $\{\mathbf{H}\}$ for DDQC in PM with $q=2\cos(\pi/12)$.}
\centering
\begin{tabular}{|c|c|}
\hline
$ |\mathcal{S}\cdot\mathbf{H}| =1$ &
\makecell{
	(0     1     0   -1)
	(0    -1     0    1)
	(1     0     0    0)
   (-1     0     0    0)
	(0     1     0    0)
	(0    -1     0    0)
	\\
	(0     0     1    0)
	(0     0    -1    0)
	(0     0     0    1)
	(0     0     0   -1)
   (-1     0     1    0)
	(1     0    -1    0)
	}
	\\ \hline
$ |\mathcal{S}\cdot\mathbf{H}| =q$ &
\makecell{
	(1     1     0   -1)
   (-1    -1     0    1)
	(1     1     0    0)
   (-1    -1     0    0)
	(0     1     1    0)
	(0    -1    -1    0)
	\\
	(0     0     1    1)
	(0     0    -1   -1)
   (-1     0     1    1)
	(1     0    -1   -1)
   (-1    -1     1    1)
	(1     1    -1   -1)
	}
\\\hline
\end{tabular}
\\
The projective matrix $\mathcal{S}$ is given by
Eqn.\,(\ref{eqn:DDQC:projMatrix}).
\end{table*}
We can use the $24$ initial reciprocal vectors in
Tab.\,\ref{tab:DDQC:initialPM} to estimate the unit cell of DDQC
in $4$-dimensional space.
From the estimation formula\,(\ref{eqn:box}) and the dual
relationship (\ref{eqn:dualRelation}), 
the size of the unit cell in $n$-dimensional direct space 
is $D_{\mathbf{H}}=2\pi$.
The projective matrix $\mathcal{S}$ of PM is given by
Eqn.\,(\ref{eqn:DDQC:projMatrix}).

In the following, we will compare the computational complexity
between CAM of different unit cell with
$D_{\mathbf{k}}=2\pi\cdot L$ and PM as
discussed in Sec.\,\ref{subsec:complexity}.
In each time step, for the dodecagonal symmetric structure, 
the computational complexity of CAM is 
\begin{align}
	C_{\mathbf{k}} = O\left(2N_{\mathbf{k}}^2 \log
	N_{\mathbf{k}}\right) = O\left(2\left(\frac{2\pi L}{\Delta
	x}\right)^2 \log \left(\frac{2\pi L}{\Delta x}\right)\right),
	\label{eqn:complexity:ddqcCAM}
\end{align}
and the computational complexity of PM is
\begin{align}
	C_{\mathbf{H}} = O\left(4 N_{\mathbf{H}}^4 \log
	N_{\mathbf{H}}\right) = O\left(4 \left(\frac{2\pi}{\Delta
	x}\right)^4 \log \left(\frac{2\pi}{\Delta x}\right) \right).
	\label{eqn:complexity:ddqcPM}
\end{align}
For $L=30$ with $E_{SDA}\approx 0.19098$ in CAM, the
computational complexity $C_{\mathbf{k}}<C_{\mathbf{H}}$ when
$\Delta x$ is smaller than $0.20943951$.  
However, for $L=208$,
$C_{\mathbf{k}}<C_{\mathbf{H}}$ only if $\Delta x > 500.6549$,
which is much larger than the size of the unit cell $D_{\mathbf{H}}$.
It can not be implemented numerically. 
The computational complexity $C_{\mathbf{H}}$ of PM 
is always smaller than that of $C_{\mathbf{k}}$.
For higher approximate accuracy of SDA with larger $L$,
$C_{\mathbf{H}}<C_{\mathbf{k}}$, and total computational
complexity of PM may be less than CAM.

Subsequently we compare the total computational complexity including time
iterations of CAM (with $L=30$) and PM through numerical experiments.
All the methods considered here have been implemented in C
language. Fourier transforms are computed using the
FFTW\,\cite{frigo1998fftw} package. The codes were run in the same
workstation, a Intel(R) Xeon(R) CPU E5450 @3.00GH memory
16 G under linux. The time step size, $\Delta t$, is always
selected as $0.1$ for both methods. To measure error we use
$l^{\infty}$ norm,
\begin{align}
	E_{CAM} = \max_\mathbf{k}\left\{ \left(\frac{\delta F}{\delta
	\phi}\right)(\mathbf{k})\right\}
\end{align}
for CAM, and 
\begin{align}
	E_{PM} = \max_\mathbf{H}\left\{ \left(\frac{\delta F}{\delta
	\phi}\right)(\mathbf{H})\right\}
\end{align}
for PM. 

\begin{figure}[!htp]
	\centering
		\includegraphics[scale=0.6]{./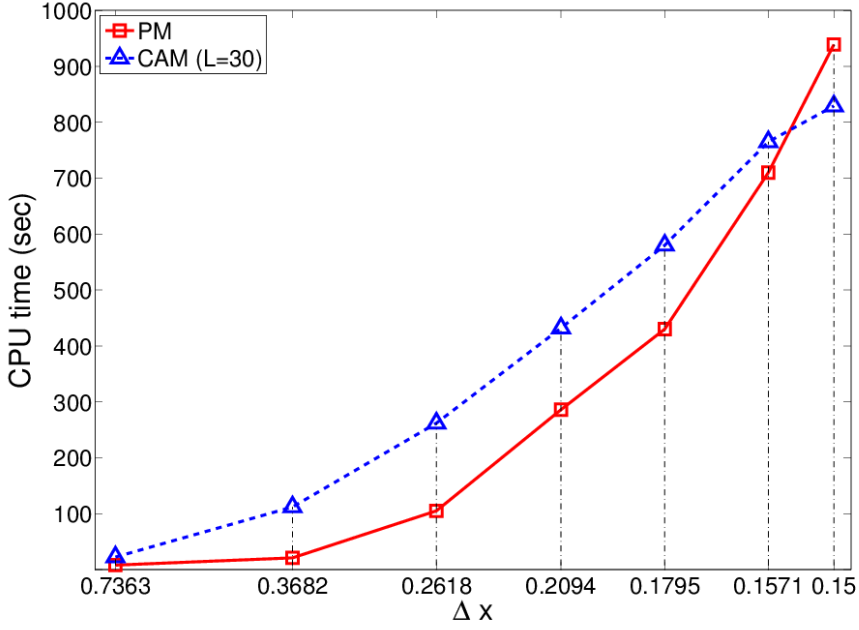}
	\caption{ 
The total CPU time required by CAM ($L=30$) and PM at the error
of $10^{-8}$, with different mesh step size $\Delta x$ at a set
of parameters: $c=50$, $\varepsilon=0.015$, $\alpha=1.0$.
}
\label{fig:DDQC:complexity}
\end{figure}
Fig.\,\ref{fig:DDQC:complexity} shows the total CPU time
reaching an error of $10^{-8}$ by CAM ($L=30$) and PM 
at a set of parameters, $c=50$, $\varepsilon=0.015$,
$\alpha=1.0$, with different discrete resolution. The number of
iterations on different grids is same. 
For CAM method, $N_{t,\mathbf{k}}=848$, while for PM approach,
$N_{t,\mathbf{k}}=472$. It may be that the accuracy of
spacial discretization is enough for the calculated structure at
the set of parameter. For the same reason, the free energy
density calculated by PM is nearly the same on these grids,
$F_{PM}=-3.524067379\mathrm{e}-03$, while the CAM obtains the equal free
energy density $F_{CAM}=-3.457837609\mathrm{e}-03$. 
However, the iterations of CAM, $N_{t,\mathbf{k}}$ is alway large than that of PM,
$N_{t,\mathbf{H}}$. As discussed in
Sec.\,\ref{subsec:result:symmetry}, it may be that PM keeps the
rotational symmetry, while CAM dose not. Therefore PM can reach the
same accuracy more quickly.
The free energy density computed by PM is lower than the CAM. It
may result from the approximate error of SDA in CAM (also see
Fig\,\ref{subfig:penaltyC:energy}).
As Fig\,\ref{fig:DDQC:complexity} demonstrates, the cost of CPU time of PM is
lower than that of CAM until the mesh step size $\Delta x$ is less than
about $0.15$. It is different from the above analysis about the
computational complexity in each time step because of the
different iterations between PM and CAM. 

From these results, we find that with relative large mesh step
size $\Delta x$, PM can obtain enough numerical accuracy with
less computational cost than CAM ($L=30$). CAM always has
approximant error unless $D_{\mathbf{k}}\rightarrow\infty$. 
And as discussed in the section, higher accuracy of approximant
error will result in heavy computational burder. Therefore,
PM has less computational complexity than CAM to high
accuracy in computing these quasicrystals represented in $4$
dimensions, such as dodecagonal symmetric structures.

\subsection{Dodecagonal Symmetric Structure}
\label{subsec:result:symmetry}

The initial reciprocal coefficients, as mentioned in
Sec.\,\ref{subsec:result:complexity}, at which the Fourier
coefficients are nonzero are used as initial values to find
equilibrium dodecagonal symmetric structures.
The simulations of CAM are performed on a $256\times 256$ grids
with $L=30$, and the simulations of PM are performed with
$N_{\mathbf{H}}^4=32^4$ plane waves.
The projective matrix $\mathcal{S}$ of PM is given by
expression\,(\ref{eqn:DDQC:projMatrix}).
We also find that the morphologies and the free energy
density will not change with denser grids. 
The morphology of the dodecagonal approximant computed by CAM
in physical space is given in Fig.\,\ref{subfig:density:cam}.
It is a periodic structure.
The DDQC calculated by PM is shown in
Fig.\,\ref{subfig:density:pm}.
\begin{figure}[!htp]
	\centering
	\subfigure[Dodecagonal crystalline approximant]
	  {
	  \label{subfig:density:cam}
		\includegraphics[scale=0.5]{./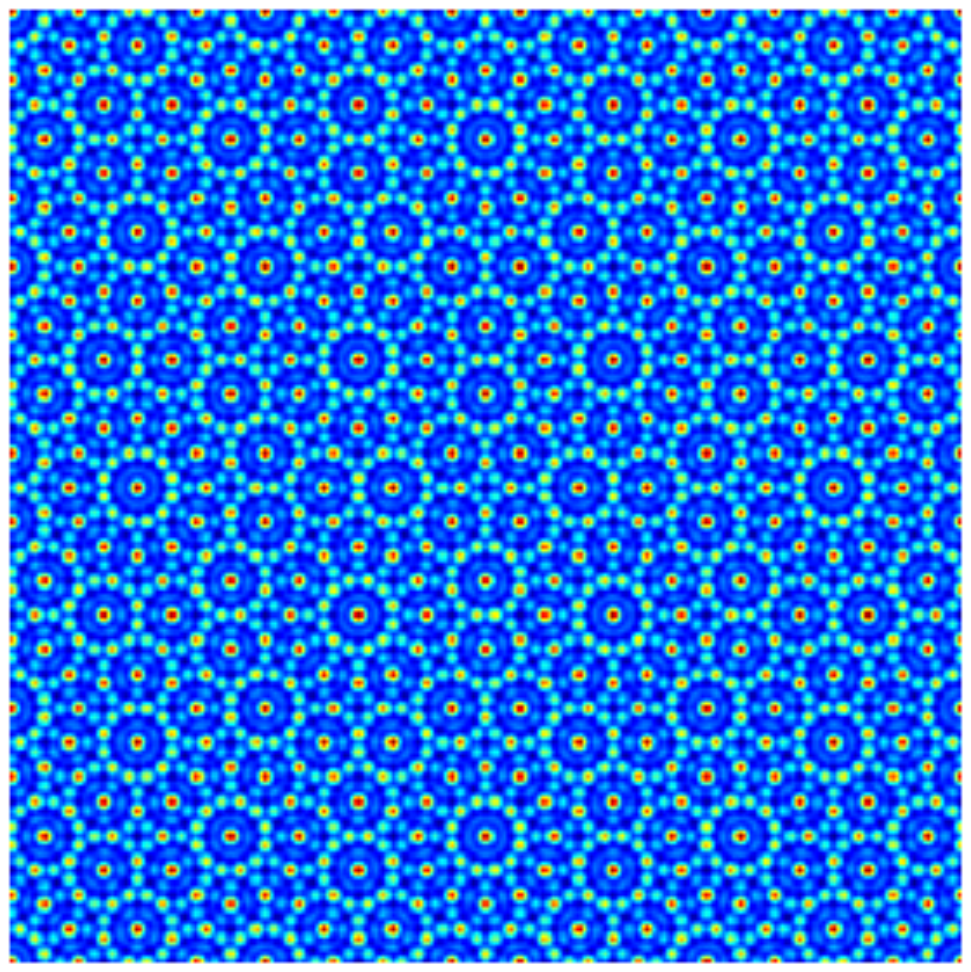}
	  }
	\subfigure[DDQC]
	  {
	  \label{subfig:density:pm}
		\includegraphics[scale=0.5]{./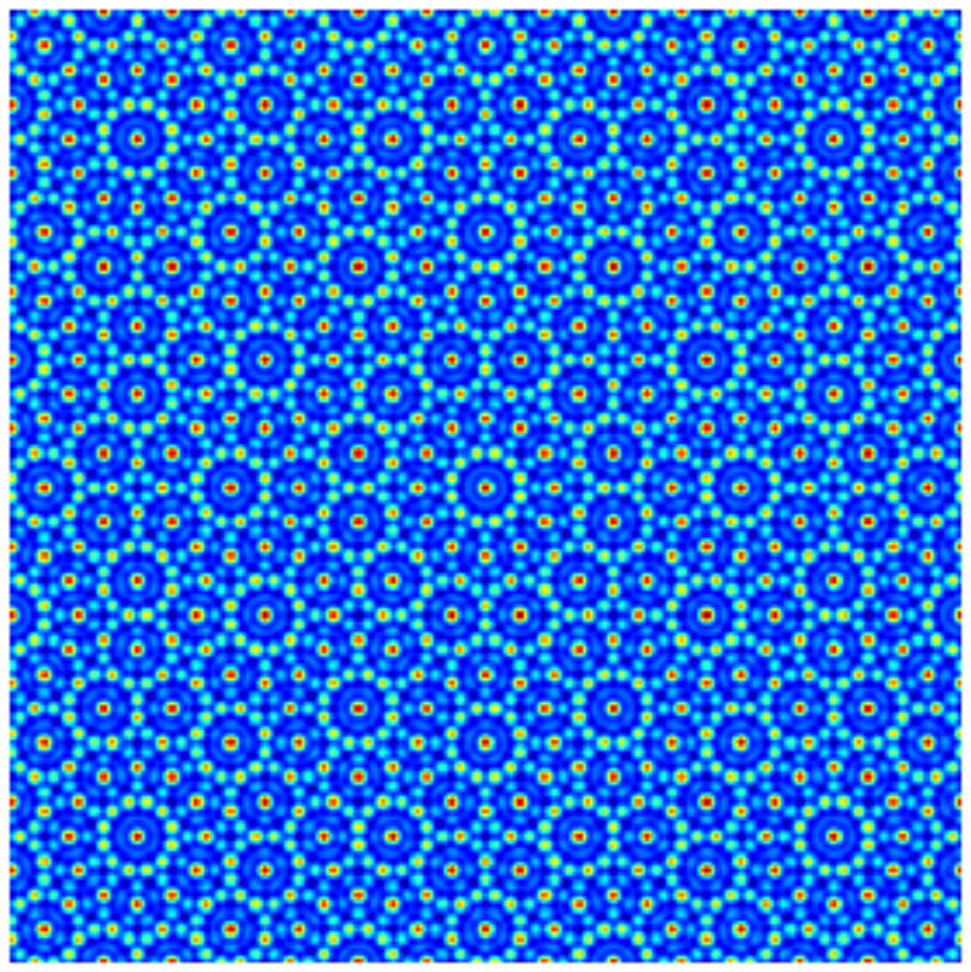}
	  }
	\caption{\label{fig:density} The morphologies of (a).
	Dodecagonal approximant computed by CAM and (b). DDQC
	calculated by PM 
	at $\varepsilon = 0.015$, $\alpha=1$, $c=50$, $q=2\cos(\pi/12)$.}
\end{figure}
Both morphologies demonstrate the 12-fold orientational symmetry
in the physical space, at least locally.
However, further analysis about the relationship of the Fourier
coefficients will come to a different conclusion.
From the rotational symmetry, 
each $12$ reciprocal vectors with dodecagonal rotational symmetry 
on the ring $|\mathbf{k}|=1$
($|\mathcal{S}\cdot\mathbf{H}|=1$), or
$|\mathbf{k}|=q$ ($|\mathcal{S}\cdot\mathbf{H}|=q$), 
should have equal Fourier coefficients.
Tab.\,\ref{tab:DDQC:PAM:coeffs} shows the 
principle Fourier coefficients calculated by CAM. 
Tab.\,\ref{tab:DDQC:PM:coeffs} gives these
Fourier coefficients on principle reciprocal vectors calculated
by PM. From these results, we can find the $12$ Fourier coefficients
calculated by CAM on each characteristic ring are not equal.
\begin{table}[!htbp]
\caption{\label{tab:DDQC:PAM:coeffs}
The Fourier coefficients of the principle reciprocal vectors for
dodecagonal crystalline approximant computed by CAM with
$L=30$ at $\varepsilon = 0.015$, $\alpha=1$, $c=150$, $q=2\cos(\pi/12)$.}
\centering
\begin{tabular}{|c|c|c|}
\hline
$ |\mathbf{k}| = L$ &
\makecell{
	(30,0)   
   (-30,0)	
	 (0,30)	
	 (0,-30)	
	 \\ \hline 
	(26,15)   
   (-26,-15)
	(15,26)
   (-15,-26)
   \\
   (-15,26)  
	(15,-26)
   (-26,15)
   (26,-15)
   }
   & 
\makecell{
	6.106618210e-02
	 \\ \hline 
	5.928204525e-02
	\\
	-
   }
	\\ \hline
$ |\mathbf{k}| = [L\cdot q]$ &
\makecell{
	(41,41)
   (-41,-41)
   (-41,41)
   (41,-41)
	 \\ \hline 
	(56,15)
   (-56,-15)
	(15,56)
   (-15,-56)
	 \\ 
   (-15,56)	
	(15,-56)
   (-56,15)
	(56,-15)
	}
	&
\makecell{
	5.458000478e-02
	 \\ \hline 
	5.683018766e-02
	  \\
	  -
	}
	\\ \hline
\end{tabular}
\end{table}
In contrast, as Tab.\,\ref{tab:DDQC:PM:coeffs} shown, 
PM can obtain the $12$ equal Fourier coefficients on each ring.
We also find the same phenomenon at different parameter coordinates.
\begin{table}[!htbp]
\caption{\label{tab:DDQC:PM:coeffs}
The Fourier coefficients of the principle reciprocal vectors for DDQC
computed by PM at 
$\varepsilon = 0.015$, $\alpha=1$, $c=150$, $q=2\cos(\pi/12)$.}
\centering
\begin{tabular}{|c|c|c|}
\hline
$ |\mathcal{S}\cdot\mathbf{H}| =1$ &
\makecell{
	(0     1     0   -1)
	(0    -1     0    1)
	(1     0     0    0)
   (-1     0     0    0)
   \\
	(0     1     0    0)
	(0    -1     0    0)
	(0     0     1    0)
	(0     0    -1    0)
	\\
	(0     0     0    1)
	(0     0     0   -1)
   (-1     0     1    0)
	(1     0    -1    0)
	}
	&
	5.856822141e-02
	\\ \hline
$ |\mathcal{S}\cdot\mathbf{H}| =q$ &
\makecell{
	(1     1     0   -1)
   (-1    -1     0    1)
	(1     1     0    0)
   (-1    -1     0    0)
	\\
	(0     1     1    0)
	(0    -1    -1    0)
	(0     0     1    1)
	(0     0    -1   -1)
	\\
   (-1     0     1    1)
	(1     0    -1   -1)
   (-1    -1     1    1)
	(1     1    -1   -1)
	}
	&
	5.855442187e-02
\\\hline
\end{tabular}
\end{table}
For CAM, we also use higher approximate accuracy $E_{SDA}\approx0.07042
$, with the computational box $D_{\mathbf{k}}\times
D_{\mathbf{k}}$, $D_{\mathbf{k}}=820\pi$. The numerical
experiments are implemented
on a $4096\times4096$ grids. We find that CAM can capture the
dodecagonal approximant, however, it still can not obtain 12 equal
Fourier coefficients on each ring
as well as the above numerical experiments.
Therefore, we come to a conclusion that PM can keep
the non-crystallographic symmetry to high accuracy, 
while CAM can not unless $D_{\mathbf{k}}\rightarrow\infty$.

\subsection{Free Energy Density}
\label{subsec:energy}

In this section, we compare the free energy density computed by
CAM, PM and single-wave approximation (SWA) approaches. 
The SWA method uses principal Fourier vectors to calculate the
free energy functional analytically under some
constraints\,\cite{chaikin1995principles}. It translates the
energy functional into a single-variable or multi-variable
function. The minimum of free energy can be approximately
obtained by minimizing the reduced energy function with
respect to these variables. In the Lifshitz-Petrich model, 
the principal Fourier vectors of DDQC have been shown in
Fig\,\ref{fig:12fold:initial}. When $c\rightarrow\infty$, the
reduced free energy function of DDQC pattern computed by
SWA\,\cite{lifshitz1997theoretical, jiang2013theory} is
\begin{align}
	F_{12} = 99(\phi_1^4+\phi_q^4) + 144(\phi_1^3+\phi_1^3\phi_q) +
	360\phi_1^2\phi_q^2 
	-24(\phi_1^2\phi_q+\phi_1^2 \phi_q) -8(\phi_1^3+\phi_q^3) -
	6\varepsilon(\phi_1^2+\phi_q^2)/\alpha^2,
	\label{}
\end{align}
$\phi_1, \phi_q\in\mathbb{R}$ stand for Fourier coefficients on the
$|\mathbf{k}|=1$ and $|\mathbf{k}|=q$ rings, respectively.
The approximated minimum of DDQC can be obtained by
minimizing the above equation with respect to $\phi_1$ and $\phi_q$.
\begin{figure}[!htp]
	\centering
	\subfigure[Free energy density]
	  {
		\label{subfig:penaltyC:energy}
		\includegraphics[scale=0.36]{./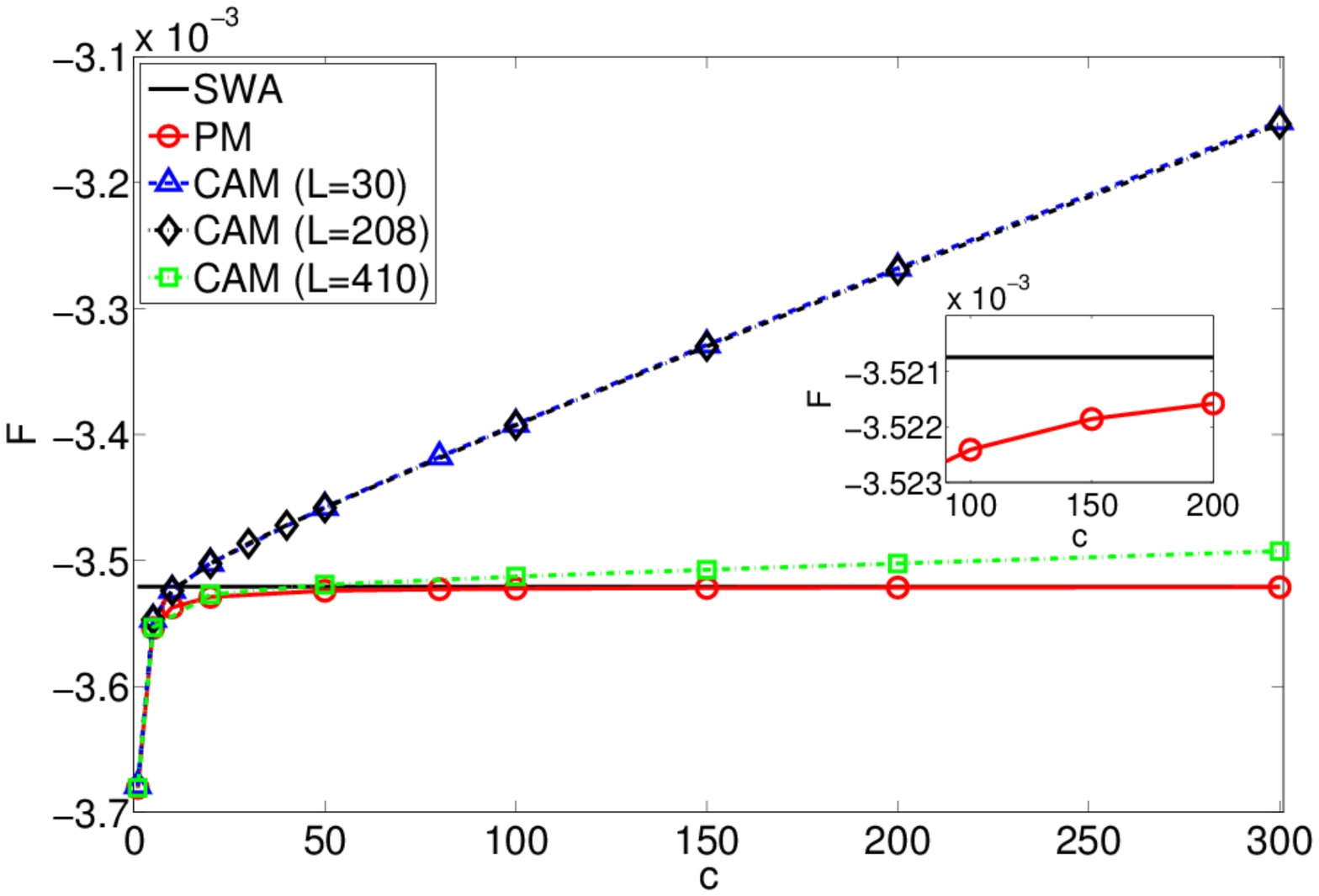}
	  }
	  \subfigure[Interfacial energy (Laplacian terms)]
	  {
		\label{subfig:penaltyC:laplace}
		\includegraphics[scale=0.36]{./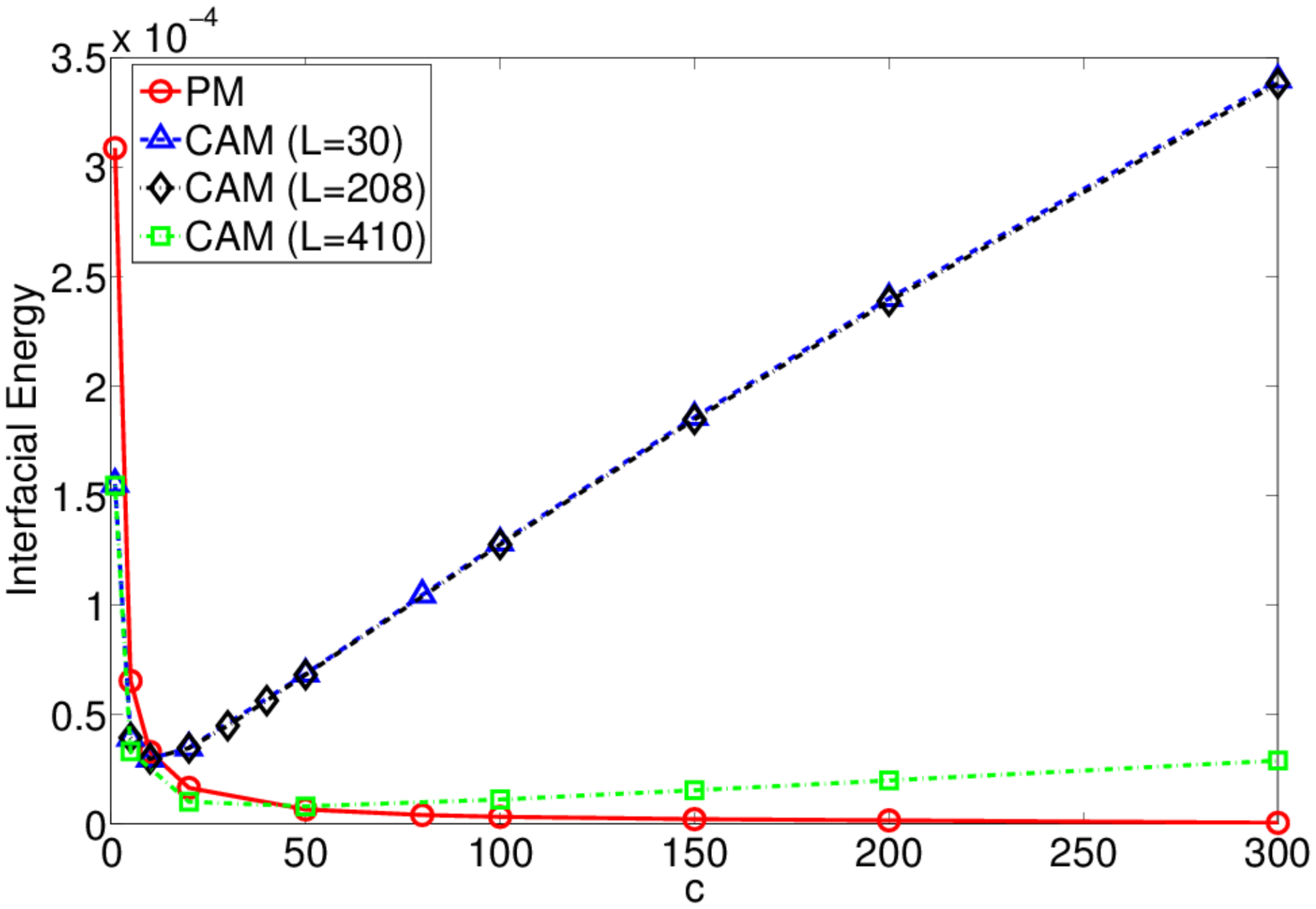}
	  }
	  \caption{(a). Free energy density calculated by PM and CAM,
	  as a function of the penalty factor $c$,
	  relative to that of SWA
	  ($c\rightarrow\infty$), with $q=2\cos(\pi/12)$, $\alpha=1.0$,
	  $\varepsilon = 0.015$. (b). Corresponding interfacial energy
	  computed by PM and CAM.
	  }
	\label{fig:penaltyC}
\end{figure}

\begin{table*}[!htbp]
	\caption{\label{tab:cpu:c}
	The CPU time required for minimizing the free energy
	\eqref{eqn:LP} by the PM and CAM approaches at error of
	$10^{-6}$ with $q=2\cos(\pi/12)$, $\alpha=1.0$ $\varepsilon = 0.015$
	for different $c$. In simulations, $\Delta t = 0.1$; 
	$N_{\bm H}=24$ utilized in PM;
	$N_{\bm k}=720, 2048, 4096$, used in CAM corresponding to $L=30,
	208, 410$, respectively.  }
\centering
\begin{tabular}{|c||c|c|c|c|c|c|c|}
\hline
 & \multicolumn{7}{c|}{CPU time (sec)}
\\ \hline\hline
$c$ & 5 & 20 & 50 & 100 & 150 & 200 & 300
\\ \hline 
PM	& 45.86 & 40.91 & 40.38 & 39.49& 39.69 & 39.72 & 39.39 
\\ \hline
CAM ($L=30$) & 122.49 & 136.97 & 143.10 & 181.89 & 189.29 & 185.33 & 197.05
\\ \hline
CAM ($L=208$) & 1172.56 & 1381.57& 1567.82 & 1716.40 &1764.55 & 1896.65 & 1847.07
\\ \hline
CAM ($L=410$) & 3457.19 &  3655.40 &3397.97 & 3803.35 & 4035.10& 4471.93& 4767.31 
\\ \hline
\end{tabular}
\end{table*}

Fig.\,\ref{subfig:penaltyC:energy} shows the free energy density
calculated by PM and CAM ($L=30, 208, 410$), as a function of the
penalty factor $c$, relative to that of SWA
($c\rightarrow\infty$) with $q=2\cos(\pi/12)$, $\alpha=1.0$,
$\varepsilon = 0.015$. 
In numerical simulations, CAM are performed with $720\times 720$
plan waves ($\Delta x = 0.2618$) for $L=30$; 
with $2048\times2048$ plan waves for $L=208$;
with $4096\times4096$ plan waves for $L=410$. 
$32^4$ ($\Delta x = 0.2618$) plan waves were utilized in PM approach.
The free energy density computed by CAM is
heavily dependent on the approximate error of SDA. 
As Tab.\,\ref{tab:DDQC:SDA} shown, when $L=30$ and $208$, the
approximate errors of SDA are $0.19098$ and $0.17486$. Under the
approximate error, the CAM manifests the nearly same behavior.
The free energy density $F_{CAM}$ computed by CAM is less than 
$F_{SWA}$ obtained by SWA until the penalty factor $c$ increases
to about $10$. Then $F_{CAM}$ is larger than $F_{SWA}$ as $c$ increases.
The reason is that the CAM can not
control the principal reciprocal vectors located on $|\mathbf{k}|=1$
and $|\mathbf{k}|=q$ when the penalty factor $c$ increases, as
shown in Fig.\,\ref{subfig:penaltyC:laplace}.
In other words, CAM can not calculate the interfacial energy
(Laplacian terms) accurately when $c$ is large. However the
role of the differential terms in the Lifshitz-Petrich model is
to keep the interactions at two characteristic scales. It is one of 
the reasons that the Lifshitz-Petrich model is able to stabilize
quasicrystals\,\cite{lifshitz1997theoretical}.
Correspondingly, the $F_{CAM}$ diverges from the true free
energy density. We have also used $L=410$ which improves the approximate
accuracy of SDA to $E_{SDA}= 0.07042$ in CAM to observe the free
energy density. It improves the accuracy of free energy, however, 
the same problem appeared when $c$ is larger than $50$.
In contrast, the free energy density $F_{PM}$ calculated by PM is always
smaller than $F_{SWA}$ for all $c$ and converges to the $F_{SWA}$
as $c\rightarrow\infty$. It is because the more basic functions
are used in simulations by PM than that of SWA. 
The approximated space of PM is more precise than that of SWA.
The PM can also maintain its
principle reciprocal vectors located on scales $1$ and $q$ which is
consistent with the model, as shown in
Fig.\,\ref{subfig:penaltyC:laplace}. 
The corresponding CPU times spent in simulations are shown in  
Tab.\,\ref{tab:cpu:c}. Compared with the CAM method,
the PM can evaluate the free energy density to high numerical
accuracy with economical computational cost in dodecagonal
symmetric phase.

\section{Conclusions and Outlook}
\label{sec:conclusion}
In the article, we summary the features of CAM approach, 
and point out the advantage of using the periodic condition and
the restriction condition of the SDA in this method. 
The errors of CAM come from two parts: the approximate error of
SDA and numerical discretization.
Subsequently we propose a new numerical method, the PM,
with the help of higher-dimensional
reciprocal space. The developed method overcomes the restriction
of SDA and enables us view a quasicrystal as a
periodic structure, and uses the periodic condition conveniently. 
The projection method can reduce the computational effort
efficiently by computed quasicrystals in a higher-dimensional
unit cell and using the pseudospectral method.  
By applying two methods to the Lifshitz-Petrich model, 
we analyze the computational complexity. The
computational complexity of CAM is
dependent on the approximate error and the numerical resolution,
while that of PM is only dependent upon the numerical resolution. 
Specially, in computing dodecagonal symmetric pattern, 
PM has less computational complexity than that of CAM.
We also find that our approach can keep the rotational symmetry 
accurately, and more significantly, the present algorithm can calculate
the free energy density to high accuracy without any approximate error of SDA.
However, the dimensions of our computational space are dependent
on the rotational symmetry of quasicrystals. 
For quasiperiodic structures with 5-, 8-, 10- and 12-fold symmetry,
the computational space dimension is four. PM has accepted
computational complexity. For a quasicrystal with of 7-, 9-, and
18-fold quasicrystals, the dimension of computational space is up
to six. It is also emphasized that the recently discovered
quasicrystals can be all represented as periodic structures in less
than $6$-dimensional space. Our future work will focus
on the improvement of the computational efficiency of the
projection method, and make it to $6$-dimensional cases. 
Finally, we should point out that 
PM can be also applied to study general $d$-dimensional aperiodic
structures\,\cite{janssen2007aperiodic} whose Fourier spectrum
consists of $\delta$-peaks, $\mathbf{H}=\sum_{i=1}^n h_i\mathbf{b}^*_i$,
$h_i\in\mathbb{Z}$, of rank $n$ ($n>d$) with basis vectors
$\mathbf{b}_i^*$, $i=1,2,\dots,n$.

\section*{Acknowledgments}
The authors would like to acknowledge Dr.\,Robert A. Wickham for
his help with the English grammar and useful advice.
The work is supported by the National 
Science Foundation of China 21274005 and 50930003, and the 
China Postdoctoral Science Foundation 2011M500179.

\end{document}